\def\c{\ensuremath{\mathbf{c}}}
\def\cost{\mathit{cost}}
\def\OPT{\mathit{OPT}}
\def\APP{\mathit{APP}}
\def\ALG{\mathit{ALG}}
\newcommand{\remove}[1]{}
\newenvironment{varalgorithm}[1]
  {\algorithm}
  {\endalgorithm}
\title{On the Tree Search Problem with Non-uniform Costs}
\author{
Ferdinando Cicalese\inst{1}
\and
Bal\'azs Keszegh\inst{2}
\thanks{Research supported by Hungarian National Science Fund (OTKA), under
grant PD 108406 and under grant NN 102029 (EUROGIGA project GraDR
10-EuroGIGA-OP-003) and the J\'anos Bolyai Research Scholarship of the
Hungarian Academy of Sciences. }
\and
Bernard Lidick\'{y}\inst{3}
\thanks{Research is partially supported by NSF grant DMS-1266016.}
\and
D\"om\"ot\"or P\'alv\"olgyi\inst{4}
\thanks{Research supported by Hungarian
National Science Fund (OTKA), under grant PD 104386 and under grant NN 102029
(EUROGIGA project GraDR 10-EuroGIGA-OP-003) and the J\'anos Bolyai Research
Scholarship of the Hungarian Academy of Sciences.}
\and
Tom\'a\v{s} Valla\inst{5}
\thanks{Supported by the Centre of Excellence -- Inst.\ for Theor.\ Comp.\ Sci.
(project P202/12/G061 of GA~\v{C}R).}
}
\institute{Department of Computer Science, University of Salerno, Italy
\and
R\'enyi Institute, Hungary
\and
University of Illinois at Urbana-Champaign, USA, {\tt lidicky@illinois.edu}
\and
E\"otv\"os University, Hungary
\and
Czech Technical University, Faculty of Information Technology, Prague, Czech Republic,
{\tt tomas.valla@fit.cvut.cz}
}
\date{}
\begin{document}

\maketitle

\begin{abstract}
Searching in partially ordered structures has been considered in the context of
information retrieval and efficient tree-like indexes, as well as in hierarchy
based knowledge representation.
In this paper we focus on tree-like partial orders and consider the problem of
identifying an initially unknown vertex in a tree by asking edge queries:
an edge query $e$ returns the component of $T-e$ containing the vertex sought for,
while incurring some known cost $c(e)$.

The Tree Search Problem with Non-Uniform Cost is:
given a tree $T$ where each edge has an associated cost,
construct a strategy that minimizes the total cost of the identification in the worst case.

Finding the strategy guaranteeing the minimum possible cost is an NP-complete
problem already for input tree of degree 3 or diameter 6.
The best known approximation guarantee is the $O(\log n/\log \log \log n)$-approximation
algorithm of [Cicalese et al. TCS 2012].

We improve upon the above results both from the algorithmic and the
computational complexity point of view:
We provide a novel algorithm that provides an $O(\frac{\log n}{\log \log n})$-approximation
of the cost of the optimal strategy.
In addition, we show that finding an optimal strategy is
NP-complete even when the input tree is a spider, i.e., at most one vertex has
degree larger than 2.
\end{abstract}

\section{Introduction}
The design of efficient procedures for searching in a discrete structure is a
fundamental problem in discrete mathematics \cite{Ahlswede,Aigner} and computer science \cite{Knuth}.
Searching is a basic primitive for building and managing operations of an information system as ordering,
updating, and retrieval.
The typical example of a search procedure is binary search which allows to
retrieve an element in a sorted list of size $n$ by only looking at $O(\log n)$
elements of the list.
If no order can be assumed on the list, then it is known that any procedure
will have to look at the complete list in the worst case.
Besides these two well characterized extremes, extensive work has also been
devoted to the case where  the underlying structure of the search space is a partial order.
Partial orders can be used to model lack of information on the totally ordered
elements of the search space \cite{Linial} or can naturally arise from the
relationship among the elements of the search space,
like in hierarchies used to model knowledge representation \cite{Wermelinger},
or in tree-like indices for information retrieval of large databases \cite{Ben-Asher}.
For more about applications of tree search see below.

In this paper, we focus on the case where the underlying search space is a
tree-like partially ordered set and tests have nonuniform costs.
We investigate the following problem.

\begin{figure}
\centering
\includegraphics{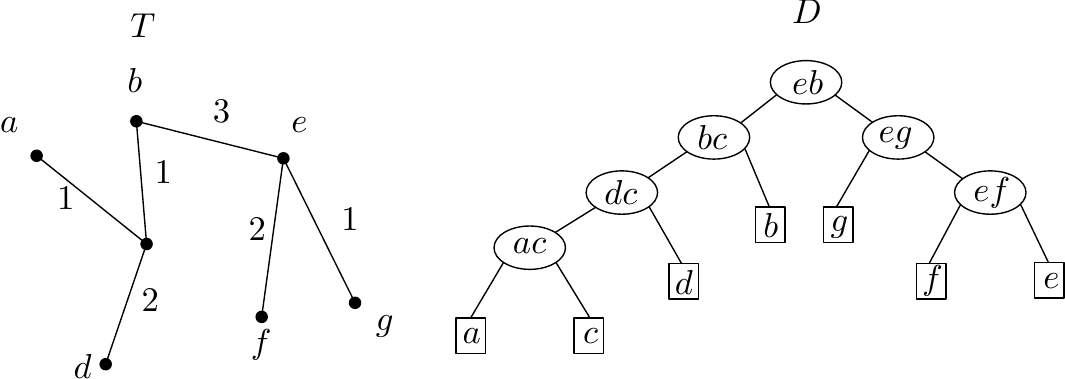}
\caption{An example of the tree search problem,
$T$ is the input tree and $D$ is a decision tree with $cost(D) = 7 = cost^D(a) = cost^D(c)$.
If the vertices of the tree $T$ represent the parts of a device to assemble,
the decision tree corresponds to the assembly procedure that at time 0 joins $e$ with $b$;
then at time $3$ joins $b$ with $c$ and $e$ with $g$.
At time $4$ the joining of  $d$ with $c$ and $e$ with $f$ is started.
Finally, at time $6$ part $a$ is joined with part $c$ and the procedure ends by time $7$.}
\label{fig:problem-defi}
\end{figure}

\noindent
{\sc The Tree Search Problem with non-uniform costs}

{\em Input}: A tree $T=(V, E)$ with non-negative rational costs assigned to the edges
defined by a $c : e \in E \mapsto c(e) \in \mathbb{Q}$.

{\em Output}: A strategy that minimizes (in the worst case) the cost spent to identify
an initially unknown vertex $x$ of $T$ by using {\em edge queries}.
An {\em edge query} $e = \{u, v\} \in E$ asks for the subtree $T_u$ or $T_v$
which contains $x$, where $T_u$ and $T_v$ are the
connected components of $T-e$, including the vertex $u$ and $v$ respectively.
The cost of the query $e$ is $c(e)$.
The cost of identifying a vertex $x$ is the sum of the costs of the queries asked.

\smallskip

More formally, a strategy for the Tree Search Problem with nonuniform costs
over the tree $T$ is a  \emph{decision tree} $D$ which is a rooted binary tree
with $|V|$ leaves where every leaf $\ell$
is associated with one vertex $v \in V$ and every internal node\footnote{For
the sake of avoiding confusion between the input tree and the decision tree, we
will reserve the term vertex for the elements of $V$ and the term $node$ for
the vertices of the decision tree $D$.}
$\nu \in V(D)$ is associated with one test $e = \{u, v\} \in E$.
The outgoing edges from  $\nu$ are associated with the possible outcomes of the
query, namely, to the case where the vertex to identify lies in $T_u$ or $T_v$ respectively.
Every vertex has at least one associated leaf.
The actual identification process can be obtained  from $D$ starting with the
query associated to the root and moving towards the leaves based on the answers received.
When a leaf $\ell$ is reached, the associated vertex is output
(see Fig.~\ref{fig:problem-defi} for an example).

Given a decision tree $D$, for each vertex $v\in V(T)$,
let $\cost^D(v)$ be the sum of costs of the edges associated
to nodes on the path from the root of $D$ to the leaf identifying $v$.
This is the total cost of the queries performed when the
strategy $D$ is used and $v$ is  the vertex to be identified.

In addition, let the cost of $D$ be defined by
$$
\cost(D) = \max_{v\in V(T)} \cost^D(v).
$$
This is the worst-case cost of identifying a vertex of $T$ by the decision tree $D$.
The optimal cost of a decision tree for the instance represented by the tree $T$
and the cost assignment $\c$ is given by
$$
\OPT(T,\c) = \min_{D} \cost(D),
$$
where the $\min$ is over all decision trees $D$ for the instance $(T, c)$.

\smallskip
\noindent
{\bf Previous results and related work.}
The Tree Search Problem has been first studied under the name of
tree edge ranking \cite{IyeRatVij91,laTGreSch95,Lam98optimaledge,Makino2001,Dereniowski08},
motivated by multi-part product assembly.
In \cite{Lam98optimaledge} it was shown that in the case where the tests have uniform cost,
an optimal strategy can be found in linear time.
A linear algorithm for searching in a tree with uniform cost was also provided in \cite{Mozes}.
Independently of the above articles, the first paper where the problem is
considered in terms of searching in a tree is \cite{Ben-Asher}, where the more
general problem of searching in a poset was also addressed.

The variant considered here in which the costs of the tests are non-uniform
was first studied by Dereniowski \cite{Dereniowski06} in the context of edge ranking.
In this paper, the problem was proved NP-complete for trees of diameter at most 10.
Dereniowski also provided  an $O(\log n)$ approximation algorithm.
In \cite{CicaleseJLV12} Cicalese et al.\  showed that the tree search problem
with non-uniform costs is strongly NP-complete already for input trees of
diameter 6, or maximum degree 3, moreover, these results are tight.
In fact, in \cite{CicaleseJLV12}, a polynomial time algorithm computing the
optimal solution is also provided for diameter 5 instances and an $O(n^2)$
algorithm for the case where the input tree is a path.
For arbitrary trees, Cicalese et al.\ provided
an $O(\frac{\log\ n}{\log \log \log n})$-approximation algorithm.


%
%
%
%

\smallskip
\noindent
{\bf Our Result.} Our contribution is both on the algorithmic and on the complexity side.
On the one hand, we provide a new approximation algorithm for the tree search problem with non-uniform costs
which improves upon the best known guarantee given in \cite{CicaleseJLV12}.
In Section 3 we will prove the following result.

\begin{theorem}\label{thm:main}
There is an $O(\log n / \log\log n)$-approximation algorithm
for the Weighted Tree Search Problem that runs in polynomial time in $n$.
\end{theorem}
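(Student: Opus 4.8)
\medskip
\noindent\emph{Proof plan.}
The plan is to fix a guess $B$ for the optimum and build, in polynomial time, a decision tree of worst‑case cost $O\!\big(\tfrac{\log n}{\log\log n}\big)\cdot B$ whenever $B\ge \OPT(T,\c)$. Binary searching over $B$ (it suffices to try the $O(n)$ edge costs, their partial sums along paths, and bisect on the rationals in between) then yields the claimed guarantee, since the smallest feasible guess is at most $\OPT(T,\c)$. Everything rests on a few elementary bounds that I would record first. For the target vertex $x$, the root‑to‑leaf path of $x$ in any decision tree must query every edge incident to $x$ (else $x$ is not separated from a neighbour), so $\OPT(T,\c)\ge \max_{v}\sum_{e\ni v}c(e)\ge \max_{e}c(e)$. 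For any subtree $S\subseteq T$ one has $\OPT(S,\c)\le\OPT(T,\c)$: take an optimal decision tree for $T$ and delete every query whose edge is not in $S$ — such a query leaves all of $S$ on one side and is useless once the target is known to lie in $S$ — obtaining a valid decision tree for $S$ of no larger cost. Finally, a decision tree for a $k$‑vertex tree is binary with $\ge k$ leaves, hence of depth $\ge\log_2 k$, giving $\OPT(S,\c)\ge \lceil\log_2|V(S)|\rceil\cdot\min_{e\in S}c(e)$; and trivially $\OPT(S,\c)\le\sum_{e\in S}c(e)$ by querying all edges of $S$ in any order.

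The engine of the argument is a single \emph{phase}: given a subtree $T'$ with $m=|V(T')|$ and a budget $B\ge\OPT(T',\c)$, construct in polynomial time a partial strategy — a decision tree whose leaves are labelled by subtrees of $T'$, not by single vertices — of cost $O(B)$ such that every leaf‑subtree has at most $m/\log n$ vertices (or is a single vertex). Granting this, the theorem follows by iteration: after a phase we are, in every branch, left with a subtree $T''$ with $|V(T'')|\le m/\log n$ and, by subtree monotonicity, $\OPT(T'',\c)\le\OPT(T',\c)\le B$, so the \emph{same} budget $B$ is reusable. Starting from $T'=T$, the sizes decrease as $n,\ n/\log n,\ n/(\log n)^2,\dots$, so after $O(\log n/\log\log n)$ phases every branch is a single vertex, and the total cost along any root‑to‑leaf path is $O(\log n/\log\log n)\cdot B$. (Any fixed polylogarithmic reduction per phase would do equally well; a factor $\log n$ gives the cleanest count of $\log_{\log n} n=\Theta(\log n/\log\log n)$ phases.)

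Realising one phase is where the real work lies, and I expect it to be the main obstacle. Note that such a phase exists information‑theoretically: truncating an optimal decision tree for $T'$ along each branch at the first point where the surviving subtree has $\le m/\log n$ vertices produces exactly such a partial strategy of cost $\le B$; the difficulty is to match this in polynomial time up to a constant factor without knowing the optimal tree. I would classify the edges of $T'$ into \emph{expensive} and \emph{cheap} by a carefully tuned threshold $\theta=\Theta(B/\mathrm{polylog}\,n)$. The logarithmic lower bound together with subtree monotonicity forces structure on the pieces one gets after deleting all expensive (resp.\ all cheap) edges — essentially, a long path or a large connected cluster cannot consist entirely of ``too expensive'' edges when $B\ge\OPT(T',\c)$. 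On the ``path‑like/sparse'' parts I would query $\Theta(\mathrm{polylog}\,n)$ cheap edges chosen so that their removal splits the part into balanced pieces, and then locate the target's piece by binary search over the pieces — crossing only $O(\log\log n)$ of these cheap edges along any branch, for a total cost $o(B)$, while dividing the size by a polylog factor. On the ``dense'' parts (vertices of large degree) I would instead query the whole star $\sum_{e\ni v}c(e)\le\OPT(T',\c)\le B$ around a suitable vertex $v$, which splits off many small subtrees at once. Any remaining subtree that is globally cheap, $\sum_{e}c(e)=O(B)$, is finished outright by querying all its edges.

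The technical crux — and the step I expect to fight with — is to stitch these tactics together so that, \emph{simultaneously}, every root‑to‑leaf branch of the phase pays only $O(B)$ and ends in a subtree of size $\le m/\log n$, for an \emph{arbitrary} tree rather than the clean special cases (paths, stars, balanced trees); in particular one must rule out, via the lower bounds above, the configuration of many expensive edges lying on a common branch, which is exactly what would break the cost bound. Once the phase lemma is in place, the remaining steps — the base case (a subtree with $\sum_e c(e)=O(B)$, or with $O(1)$ vertices), reusing $B$ across phases via subtree monotonicity, assembling the global decision tree, and the outer binary search on $B$ — are routine and polynomial, and together they give the $O(\log n/\log\log n)$‑approximation.
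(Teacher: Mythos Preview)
Your high-level recursion is exactly the paper's: spend $O(\OPT)$ per phase to cut every live subtree down by a factor $\log n$, then iterate $O(\log n/\log\log n)$ times. The subtree-monotonicity and star lower bounds you quote are correct and are precisely what the paper uses. The outer binary search on $B$ is unnecessary (the paper's phase compares directly to $\OPT$ of the current subtree), but it is harmless.

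The genuine gap is the phase itself. Your sketch --- threshold edges into cheap/expensive, binary-search over cheap edges on ``path-like'' parts, query a star on ``dense'' parts --- is not a construction, and you say as much. It is unclear how to stitch these tactics on a general tree: for instance, a tree with many degree-$3$ vertices scattered along moderate-cost edges is neither path-like nor star-like, and nothing in your toolkit forces enough cheap edges to sit in the right places to give a $\log n$-factor size drop at cost $O(B)$ on every branch simultaneously. The information-theoretic existence argument (truncate an optimal tree) is fine but does not suggest an algorithm.

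The paper's phase is a different idea and avoids edge-cost thresholding entirely. Pick $t\approx 4\log n$ iterated centroids $x_1,\dots,x_t$ (always in the largest surviving component) and let $S$ be the minimal subtree spanning them; then every component of $T-S$ has size $\le |T|/\log n$. The skeleton $Y$ of $S$ --- the $x_i$ together with the branch vertices of $S$, joined by the paths of $S$ between them with each path contracted to its cheapest edge --- has at most $2t+1=O(\log n)$ vertices, so an \emph{optimal} decision tree for $Y$ is computed by brute force in time $2^{O(t)}=n^{O(1)}$. After querying the edges dictated by $D_Y$, each remaining piece of $S$ is a spider (at most one vertex of degree $\ge 3$), for which a trivial $2$-approximation holds (do the central star, then each leg optimally as a path). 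Finally, a single star query at each boundary vertex of $S$ hands off to the recursive calls. This gives cost $\le (1+2+1)\cdot\OPT(T,\c)=4\,\OPT(T,\c)$ per phase, with no budget guess and no case analysis.

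So the idea you are missing is: do not classify \emph{edges} by cost; classify \emph{vertices} by iterated-centroid rank to extract an $O(\log n)$-vertex skeleton that can be solved optimally by exhaustive search, leaving only spiders and stars --- structures on which constant-factor approximation is immediate.
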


In addition, we show that the tree search problem with non-uniform costs is NP-hard already
when the input tree is a spider\footnote{By \emph{spider} we mean a tree
with at most one vertex of degree greater than 2.}
of diameter 6.

\smallskip

\noindent
{\bf More about applications.}
We discuss some scenarios in which
the problem of searching in trees with non-uniform costs naturally arises.

Consider the problem of locating a buggy module in a program in which the
dependencies between different modules can be represented by a tree.
For each module we can verify the correct behavior independently.
Such a verification may consist in checking, for instance, whether all branches
and statements in a given module work properly. For different modules, the cost
of using the checking procedure can be different (here the cost might refer to
the time to complete the check). In such a situation, it is important to device
a debugging strategy that minimizes the cost incurred in order to locate the
buggy module in the worst case.

Checking for consistency in different sites keeping distributed copies of
tree-like data structures (e.g., file systems) can be performed
by maintaining at each node some check sum information about the subtree rooted at that node.
Tree search can be used to identify the presence of ``buggy nodes'', and
efficiently identifying the  inconsistent part in the structure, rather than
retransmitting or exhaustively checking  the whole data structure.
In \cite{Ben-Asher}, an application of this model in the area of information retrieval is also described.

Another examples comes from a class of problems which is in some sense dual to the previous ones:
deciding the assembly schedule of a multi-part device.
Assume that the set of pairs of parts that must be assembled together can be represented by a tree.
Each assembly operation requires some (given) amount of time to be performed
and while assembling two pieces, the same pieces cannot  be involved in any other assembly operation.
At any time different pairs of parts can be assembled in parallel. 
The problem is to define the schedule of assembly operations
which minimize the total time spent to completely assembly the device. 
The schedule is an edge ranking of the tree defined by the assembly operations. 
By reversing the order of the assembly operation in the schedule we obtain a
decision tree for the problem of searching in the tree of assembly operation
where each edge cost is equal to the cost of the corresponding assembly.

\section{Basic lower and upper bounds}

In this section we provide some preliminary results which will be useful in the
analysis of our algorithm presented in the next section. We introduce some
lower bounds on the cost of the optimal decision tree for a given instance
of the problem. We also recall  two exact algorithm for constructing optimal
decision tree which were given in  \cite{CicaleseJLV12}. The first is an
exponential time dynamic programming algorithm which works for any input tree.
The second is a  quadratic time algorithm for instances
where the input tree is a path.
Finally, we show a construction of 2-approximation decision trees for spider graphs.

\smallskip

Let $T$ denote the input tree and $\c$ the cost function.
It is not hard to see that, given a decision tree $D$ for $T$ we can extract
from it a decision tree for the instance of the problem
defined on a subtree $T'$ of $T$ and the restriction of $\c$ to the vertices in $T'$.
For this, we can repeatedly apply the following operation: if in $D$ there is a
node $\nu$ associated with an edge $e = \{u, v\}$, such that $T_u$ (reps.
$T_v$) is included $T - T'$ then remove the node $\nu$ together with the
subtree rooted at the child of $\nu$ corresponding to the case
where the vertex to identify is in $T_u$ (resp.\ $T_v$).
Let $D'$ be the resulting decision tree when the above step cannot be performed anymore.
Then, clearly $cost(D', \c) \leq cost(D, \c)$. We have shown the following (also observed in \cite{CicaleseJLV12}).

%
%
%

\begin{lemma}\label{lem:subtree}
Let $T'$ be a subtree of $T$. Then, $\OPT(T, \c) \geq \OPT(T', \c)$.
\end{lemma}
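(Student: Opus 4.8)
The plan is to formalize the informal argument that precedes the lemma statement: show that any decision tree $D$ for the instance $(T,\c)$ can be transformed into a decision tree $D'$ for the instance $(T',\c|_{T'})$ without increasing the cost, and then take the minimum over all $D$.

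First I would fix an optimal decision tree $D$ for $(T,\c)$, so $\cost(D) = \OPT(T,\c)$. I would then describe the pruning operation precisely: as long as there is an internal node $\nu$ of $D$ labeled by an edge $e=\{u,v\}$ of $T$ such that one of the two components $T_u$, $T_v$ of $T-e$ is entirely contained in $V(T)\setminus V(T')$, delete $\nu$ and the whole subtree hanging off the branch of $\nu$ corresponding to that component, and reconnect the surviving child subtree of $\nu$ in place of $\nu$ (attaching it to $\nu$'s parent, or making it the new root if $\nu$ was the root). Each such step strictly decreases the number of nodes, so the process terminates; call the result $D'$. The key observations to record are: (i) the branch we delete corresponds to answers that are impossible when the hidden vertex lies in $T'$ (since the whole component lies outside $T'$, no vertex of $T'$ is ever routed down that branch), so removing it loses no vertex of $T'$; (ii) every vertex of $T'$ still has at least one leaf in $D'$, and every internal node of $D'$ is still labeled by an edge that genuinely separates the vertices reaching it; when the process halts, every surviving query has both of its components meeting $V(T')$, so after restricting leaf labels to $V(T')$, $D'$ is a legitimate decision tree for $(T',\c|_{T'})$ (possibly after also contracting nodes whose query no longer separates the remaining $T'$-vertices, but the cleanest route is to argue the halting condition directly gives a valid tree, or to invoke that a decision tree may have redundant queries without affecting cost as an upper bound).

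Then I would bound the cost. For every vertex $v\in V(T')$, the root-to-leaf path for $v$ in $D'$ is obtained from the root-to-leaf path for $v$ in $D$ by deleting some nodes (the pruned ones are never on $v$'s path, by observation (i), and the reconnections only shorten paths), so $\cost^{D'}(v)$ is a sum of a subset of the edge costs summed in $\cost^{D}(v)$; since all costs are non-negative, $\cost^{D'}(v)\le \cost^{D}(v)$. Taking the maximum over $v\in V(T')$ gives $\cost(D') \le \max_{v\in V(T')}\cost^D(v) \le \max_{v\in V(T)}\cost^D(v) = \cost(D)$. Hence $\OPT(T',\c) \le \cost(D') \le \cost(D) = \OPT(T,\c)$, which is the claim.

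The main obstacle, and the only place that needs care, is point (ii): verifying that the pruned object $D'$ is genuinely a valid decision tree for $T'$ — that each vertex of $T'$ gets at least one leaf, that the binary structure is preserved under the reconnection (each surviving internal node still has two children), and that no query of $D'$ points to a component disjoint from $T'$ when the process halts. Everything else (termination, the cost inequality) is immediate from non-negativity of costs and the fact that pruning only removes nodes. I would handle (ii) by a short induction on the number of pruning steps, maintaining the invariant that at every stage the current tree, with leaf labels restricted to $V(T')$, correctly identifies every vertex of $T'$ that is consistent with the answers along any root path.
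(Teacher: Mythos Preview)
Your proposal is correct and follows precisely the approach the paper uses: the paragraph immediately preceding the lemma describes exactly this pruning of an optimal decision tree $D$ for $T$ (removing a node $\nu$ whenever one side of its query lies entirely in $T\setminus T'$) to obtain a decision tree $D'$ for $T'$ of no greater cost. Your write-up simply fleshes out the verification of validity and the cost bound that the paper leaves implicit.
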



%

Another immediate observation is that for a given input tree $T$, the value
$OPT(T, \c)$ is monotonically non-decreasing with respect to the cost
of any edge. This is recorded in the following.
\begin{lemma}\label{lem:decreasec}
Let $\c$ and $\c'$ be cost assignments on a tree $T$ such that
$\c'(e) \leq \c(e)$ for every $e \in E(T)$.
Then, $\OPT(T,\c) \geq \OPT(T,\c')$.
\end{lemma}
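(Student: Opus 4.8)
The plan is to prove Lemma~\ref{lem:decreasec} by a direct coupling argument: any decision tree that is optimal for the instance $(T, \c)$ is also a valid decision tree for the instance $(T, \c')$, and its cost under $\c'$ cannot exceed its cost under $\c$. The key observation is that the combinatorial structure of a decision tree — which edges label which internal nodes, which leaves identify which vertices, how the branches split the components of $T - e$ — depends only on the tree $T$ and not at all on the cost function. So the set of admissible decision trees for $(T, \c)$ and for $(T, \c')$ is literally the same set.

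First I would fix an optimal decision tree $D$ for $(T, \c)$, so that $\cost(D, \c) = \OPT(T, \c)$, where I now make the cost function explicit in the notation. Next, I would compare $\cost^D(v)$ under the two cost functions for an arbitrary vertex $v \in V(T)$: by definition $\cost^D(v)$ is the sum of the costs of the edges labelling the internal nodes on the root-to-leaf path ending at the leaf for $v$. Since each such edge $e$ satisfies $\c'(e) \le \c(e)$ and all costs are non-negative, summing over the (same) set of edges on that path gives $\cost^{D}_{\c'}(v) \le \cost^{D}_{\c}(v)$. Taking the maximum over all $v \in V(T)$ yields $\cost(D, \c') \le \cost(D, \c)$.

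Finally I would chain the inequalities: $\OPT(T, \c') \le \cost(D, \c') \le \cost(D, \c) = \OPT(T, \c)$, where the first inequality holds because $D$ is one particular decision tree for $(T, \c')$ and $\OPT$ is the minimum over all such trees. This is exactly the claimed statement. There is no real obstacle here; the only thing to be careful about is to state explicitly that a decision tree is a purely combinatorial object associated with $T$ (a fact already implicit in the problem definition given above), so that the same $D$ is legitimately a strategy for both cost assignments — once that is noted, the rest is a one-line monotonicity computation.
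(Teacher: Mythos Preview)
Your argument is correct and is exactly the natural one: the paper itself states this lemma as an ``immediate observation'' and gives no separate proof, so your coupling argument (the same decision tree, evaluated under the two cost functions) is precisely what is intended.
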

%

The next proposition shows that subdividing an edge cannot decrease the cost of the optimal decision tree.

\begin{proposition}\label{prop:contract}
Let $\c$ be a cost assignment on a tree $T$.
Let $v\in V(T)$ have exactly two neighbors $u_1,u_2 \in V(T)$.
If $T'$ is obtained from $T-v$ by adding the edge $\{u_1,u_2\}$ and
$\c'$ is obtained from $\c$ by setting $\c'(u_1u_2) = \min\{\c(u_1v),\c(u_2v)\}$
then $\OPT(T,\c) \geq \OPT(T',\c')$.
\end{proposition}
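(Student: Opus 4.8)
The plan is to take an optimal decision tree $D$ for the instance $(T,\c)$ and transform it into a decision tree $D'$ for $(T',\c')$ without increasing the cost, which by definition of $\OPT$ yields the claim. The key observation is that $V(T') = V(T)\setminus\{v\}$, and every edge query available in $T'$ corresponds to a query in $T$: the new edge $\{u_1,u_2\}$ plays the role of whichever of $\{u_1,v\}$, $\{u_2,v\}$ is cheaper, since removing $\{u_1,u_2\}$ from $T'$ splits $T'$ into the same two vertex sets (minus $v$) as removing the corresponding edge from $T$ would split $V(T)\setminus\{v\}$.

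First I would describe the surgery on $D$. Consider any node $\nu$ of $D$ associated with the edge $\{u_1,v\}$ (the case $\{u_2,v\}$ is symmetric). Removing $\{u_1,v\}$ from $T$ yields components $T_{u_1}$ (containing $u_1$ but not $v$) and $T_v$ (containing $v$, hence also $u_2$). In $T'$, removing $\{u_1,u_2\}$ yields components $T_{u_1}$ (unchanged) and $T_v\setminus\{v\}$. So in $D'$ I keep node $\nu$ but relabel it with the edge $\{u_1,u_2\}$; the ``answer is in $T_{u_1}$'' branch is untouched, and in the ``answer is in $T_v$'' branch I only need to delete, by the pruning operation from Lemma~\ref{lem:subtree}, the leaf associated with $v$ and any now-redundant queries. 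After processing every node labelled $\{u_1,v\}$ or $\{u_2,v\}$ this way, and deleting the (now unreachable or redundant) leaf for $v$, I obtain a binary tree $D'$ with $|V(T)|-1 = |V(T')|$ leaves, one per vertex of $T'$, every internal node labelled by an edge of $T'$ — i.e.\ a valid decision tree for $T'$.

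Next I would bound the cost. For a vertex $w\in V(T')$, the root-to-leaf path in $D'$ is obtained from the corresponding path in $D$ by relabelling each occurrence of $\{u_1,v\}$ with $\{u_1,u_2\}$ (cost drops from $\c(u_1v)$ to $\c'(u_1u_2)=\min\{\c(u_1v),\c(u_2v)\}\le \c(u_1v)$) and each occurrence of $\{u_2,v\}$ likewise, and by deleting some nodes entirely (cost drops by their nonnegative cost). Hence $\cost^{D'}(w)\le \cost^{D}(w)$ for every $w\in V(T')$, so $\cost(D')\le\cost(D)=\OPT(T,\c)$, giving $\OPT(T',\c')\le\OPT(T,\c)$.

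The one subtlety — and the step I expect to need the most care — is checking that $D'$ is genuinely well-formed: that each vertex of $T'$ still has at least one associated leaf and that no internal node ends up with an empty subtree after the relabelling. This is handled exactly as in the derivation preceding Lemma~\ref{lem:subtree}: whenever an internal node's query becomes trivial (one side of the split is empty of $T'$-vertices), we splice it out together with the vacuous branch, and this pruning only decreases costs and preserves the property that every remaining vertex of $T'$ is identified by some leaf. One should also note that an edge $\{u_1,v\}$ need not appear in $D$ at all along a given path, and that $D$ might query $\{u_1,v\}$ and later $\{u_2,v\}$ on the same path — after relabelling both become $\{u_1,u_2\}$, and the second is then redundant and gets pruned; none of this affects the cost bound. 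Alternatively, one can avoid the surgery entirely by combining earlier lemmas: $(T',\c')$ embeds into $(T,\c)$ in the sense that $T'$ with each new edge cost raised to the actual cost of, say, $\{u_1,v\}$ is obtained from a subtree of $T$ by contraction, but the direct construction above is cleaner and self-contained.
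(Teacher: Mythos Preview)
Your proof is correct and follows essentially the same approach as the paper's: take an optimal decision tree for $(T,\c)$, relabel the queries on the edges incident to $v$ by the new edge $\{u_1,u_2\}$, splice out the resulting redundancy (and the leaf for $v$), and observe that every root-to-leaf cost can only decrease. The paper's write-up is slightly more direct because it uses the structural fact that each of $\{u_1,v\}$ and $\{u_2,v\}$ is queried at exactly one node of $D$, with one an ancestor of the other on the path to $v$'s leaf, so the whole surgery is a single relabel of the ancestor and a single splice at the descendant; your version arrives at the same tree via a relabel-then-prune description.
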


The proof of Proposition~\ref{prop:contract} is deferred to the appendix.


The following two results  from \cite{CicaleseJLV12} provide exact algorithms for the construction of
optimal strategies.
More precisely, Proposition \ref{prop:exp}
provides  an exponential dynamic programming based algorithm for general trees.
Theorem \ref{thm:path} gives an $O(n^2)$ time algorithm for
the special case where the input tree is a path
and will be useful in the analysis of our main algorithm and also in the following lemma
regarding the spider tree.

\begin{proposition}[\cite{CicaleseJLV12}]\label{prop:exp}
Let $T$ be an edge-weighted tree on $n$ vertices.
Then an optimal decision tree for $T$ can be constructed in $O(2^n n)$ time.
\end{proposition}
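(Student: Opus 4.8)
The plan is to solve the problem by dynamic programming whose subproblems are indexed by the \emph{subtrees} (connected subgraphs) of $T$. For a subtree $S$, let $f(S)$ be the optimal worst-case cost of a decision tree that identifies an unknown vertex under the promise that it belongs to $V(S)$; thus $f(T)=\OPT(T,\c)$, and the goal is to compute $f(T)$ together with a tree attaining it. I claim that for every subtree $S$,
$$
f(S)=
\begin{cases}
0 & \text{if } |V(S)|=1,\\
\displaystyle\min_{e\in E(S)}\Bigl(\c(e)+\max\{f(S_1^e),f(S_2^e)\}\Bigr) & \text{if } |V(S)|\ge 2,
\end{cases}
$$
where $S_1^e$ and $S_2^e$ are the two connected components of $S-e$.

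First I would establish this recurrence. The base case is immediate. For $|V(S)|\ge 2$, take an optimal decision tree $D$ for $S$ and let $e$ be the edge queried at its root. If $e\notin E(S)$, then $S$, being connected, lies entirely in one component of $T-e$, so the outcome of this query is forced, the opposite branch is never reached, and replacing the root by the reached branch — exactly the extraction step described just before Lemma~\ref{lem:subtree} — yields a decision tree for $S$ of no larger cost. Iterating, we may assume $e\in E(S)$. The two branches of $D$ are then decision trees for $S_1^e$ and $S_2^e$ (after eliminating any further useless queries inside them, which never raises the cost), and since every vertex of $S_i^e$ pays $\c(e)$ at the root, $\cost(D)\ge\c(e)+\max\{f(S_1^e),f(S_2^e)\}$. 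Conversely, for any $e\in E(S)$, placing optimal decision trees for $S_1^e$ and $S_2^e$ below a root that queries $e$ gives a decision tree for $S$ of cost exactly $\c(e)+\max\{f(S_1^e),f(S_2^e)\}$. Together these yield the recurrence. Recording, for each $S$ with at least two vertices, an edge $e$ attaining the minimum lets us reconstruct an optimal decision tree for $T$ by unrolling the recursion starting from $S=T$.

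For the running time, note that a connected subgraph of $T$ is determined by its vertex set, so there are at most $2^n$ subproblems; we evaluate them in order of increasing $|V(S)|$, addressing each by the bitmask of its vertex set so that already-computed values are retrieved in $O(1)$ time. To evaluate a given connected $S$, root it at an arbitrary vertex and perform one traversal of $S$: this produces, for every non-root $v$, the vertex set of the part of $S$ hanging below $v$, which is precisely the component of $S-e_v$ not containing the root (here $e_v$ joins $v$ to its parent), its complement within $V(S)$ being the other component — both smaller subproblems. Hence $f(S)$ is computed in $O(n)$ time, and sweeping over all $2^n$ vertex subsets, rejecting each non-connected one by an $O(n)$ connectivity test, gives total running time $O(2^n n)$.

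The point that really needs care is the reduction carried out in the second paragraph: the bound of $2^n$ subproblems depends on restricting the DP state to \emph{connected} subgraphs, which in turn rests on the observation that a query on an edge $e$ for which the current subtree $S$ does not meet both sides of $T-e$ conveys no information and may be deleted without increasing the cost. Once that is in place, the correctness of the recurrence, the bottom-up evaluation order, and the time accounting are all routine.
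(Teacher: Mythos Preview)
Your argument is correct. Note, however, that the paper does not give its own proof of this proposition: it is imported from \cite{CicaleseJLV12} and stated without argument. The dynamic program you describe --- indexing subproblems by connected subgraphs of $T$ and using the recurrence $f(S)=\min_{e\in E(S)}\bigl(\c(e)+\max\{f(S_1^e),f(S_2^e)\}\bigr)$ --- is the natural and presumably intended approach; the crucial observation, which you identify explicitly, is that restricting to connected subgraphs bounds the number of states by $2^n$, and that a useless root query (one whose edge lies outside the current subtree) can be pruned at no cost.
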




The following theorem was proved by Cicalese et al.\ in \cite{CicaleseJLV12}
and will be useful later in the analysis of our algorithm and also in the following lemma
regarding the spider tree.

\begin{theorem}[\cite{CicaleseJLV12}]\label{thm:path}
There is an $O(n^2)$ time algorithm that constructs an optimal decision tree $D$ for
a given weighted path on $n$ vertices.
\end{theorem}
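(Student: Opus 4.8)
\emph{Proof idea.} The natural approach is dynamic programming over contiguous subpaths --- the specialization of the algorithm of Proposition~\ref{prop:exp} to paths --- together with a speed-up based on the monotonicity of the optimal first query. Write the path as $v_1,\dots,v_n$ with edges $e_i=\{v_i,v_{i+1}\}$ and let $P_{i,j}$ denote the subpath on $v_i,\dots,v_j$. Any decision tree for $P_{i,j}$ begins by querying some edge $e_k$ with $i\le k<j$, and this splits the instance into the independent subinstances $P_{i,k}$ and $P_{k+1,j}$; hence, writing $f(i,j)=\OPT(P_{i,j},\c)$, we obtain
\[
f(i,i)=0,\qquad f(i,j)=\min_{i\le k<j}\Bigl(\c(e_k)+\max\{f(i,k),\,f(k+1,j)\}\Bigr).
\]
There are $\Theta(n^2)$ subproblems (the number of subpaths) and the straightforward evaluation takes $\Theta(n)$ time each, for a total of $\Theta(n^3)$; the point is to bring this down to $O(n^2)$.

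To do so, I would establish a Knuth--Yao-type monotonicity of the optimal split point. For $i<j$ let $\kappa(i,j)$ be the largest index $k$ attaining the minimum in the recurrence for $f(i,j)$. The claim is the sandwich inequality $\kappa(i,j-1)\le\kappa(i,j)\le\kappa(i+1,j)$. Granting it, one processes the subproblems in order of increasing length $j-i$ and, when computing $f(i,j)$, scans only $k\in[\kappa(i,j-1),\kappa(i+1,j)]$; the usual telescoping over each fixed length bounds the total work by $O(n^2)$. Reading an actual optimal decision tree $D$ off the table of $\kappa$-values is then immediate and costs only $O(n)$ more.

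The real work is the monotonicity claim, and this is where the minimax (rather than additive) nature of the recurrence makes the situation non-standard. Two facts come for free: by Lemma~\ref{lem:subtree}, $k\mapsto f(i,k)$ is non-decreasing and $k\mapsto f(k+1,j)$ is non-increasing, so $k\mapsto\max\{f(i,k),f(k+1,j)\}$ first decreases and then increases; however $\c(e_k)$ may be arbitrary, so the objective itself need not be unimodal. I would therefore argue by an exchange argument: comparing the instances $P_{i,j}$ and $P_{i+1,j}$ for fixed $j$, replacing the left interval $[i,j]$ by the shorter $[i+1,j]$ can only decrease the left-branch value $f(\cdot,k)$ for every candidate $k$, so if a large $k$ is optimal for $P_{i,j}$ then no strictly smaller $k$ can be strictly better for $P_{i+1,j}$, which yields $\kappa(i,j)\le\kappa(i+1,j)$; the inequality $\kappa(i,j-1)\le\kappa(i,j)$ is symmetric. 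The care needed lies in the tie-breaking convention for $\kappa$ and in the indices where the active side of the $\max$ switches, and I expect this case analysis --- not the dynamic-programming bookkeeping --- to be the main obstacle.
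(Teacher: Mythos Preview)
The paper does not prove Theorem~\ref{thm:path}; it is quoted from~\cite{CicaleseJLV12} and used as a black box, so there is no proof here to compare your proposal against.

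On the substance of your proposal: the recurrence is correct and the $O(n^3)$ dynamic program is immediate. The reduction to $O(n^2)$ via a Knuth--Yao monotonicity of the optimal split, however, is the genuine gap, as you yourself suspect. Your exchange argument does not go through: the observation that passing from $P_{i,j}$ to $P_{i+1,j}$ weakly decreases the left-branch term $f(\cdot,k)$ for every $k$ says nothing about where the argmin moves, since the decrease may well be larger for small $k$ than for large $k$ --- which is exactly what would push the optimum to the left, not to the right. The classical Knuth--Yao theorem relies on a quadrangle inequality for an additive weight in a recurrence of the form $\min_k\{w(i,j)+f(i,k)+f(k{+}1,j)\}$; your recurrence is of the min--max type $f(i,j)=\min_k\{c(e_k)+\max(f(i,k),f(k{+}1,j))\}$, to which that theorem does not apply directly. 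A monotonicity statement tailored to this recurrence may be true, but it needs its own proof; that is the real content of the $O(n^2)$ bound, and it is not supplied by what you wrote.
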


Note that for a star $T$ any decision tree $D$ has the same cost, since all the edges have to be asked in the worst case.
Hence, for a tree $T$ such that there is only one node with degree greater than $1$
we have $OPT(T, \c) = \sum_{e \in E(T)} c(e),$ for any cost function $\c$.

\begin{definition}
A tree $T$ is a \emph{spider} if there is at most one vertex in $T$ of degree greater than two.
We refer to this vertex as the {\em head} (or {\em center}) of the spider.
Moreover,  each path from the head of the spider to one of the leaves will be referred to as  a {\em leg} of the spider.
\end{definition}

\begin{lemma}\label{lem:spider}
Let $T$ be a spider.
Then there is an algorithm which computes a 2-approximate decision tree $D$ for $T$
and runs in time $O(n^2)$.
\end{lemma}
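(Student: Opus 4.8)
The plan is to decompose the spider $T$ with head $h$ into its legs $L_1, \dots, L_k$, treat the search problem on the whole spider as essentially a search problem on a single path formed by two of the legs plus a "side cost" for all the other legs, and then invoke Theorem~\ref{thm:path} as a black box. The intuition is that a worst-case identification on a spider will, in the bad case, traverse into (at most) one leg; the decision tree built for the path must "pay" for the edges of all other legs in the worst case, just as in a star. So first I would, for each leg $L_i$, compute the total cost $s_i = \sum_{e \in E(L_i)} \c(e)$, and let $S = \sum_{i=1}^k s_i$ be the total cost of all edges. A trivial decision tree that simply queries every edge of $T$ in some fixed order has cost exactly $S$ (every vertex requires all edges to be distinguished in the worst case when $T$ is a star-like object, and in general at most $S$), which already gives one candidate upper bound.

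The heart of the argument is a lower bound of the form $\OPT(T, \c) \geq \max\{ \OPT(P, \c_P),\ S - (\text{something}) \}$ for a suitable path $P$. Concretely, I would argue that for any two distinct legs $L_i$ and $L_j$, the path $P_{ij}$ obtained by gluing $L_i$ and $L_j$ at $h$ is a subtree of $T$, so by Lemma~\ref{lem:subtree} we get $\OPT(T,\c) \geq \OPT(P_{ij}, \c)$. Separately, I claim $\OPT(T,\c) \geq S - \min_i s_i$: in any decision tree $D$, consider the last leg to be "resolved"; the vertex deepest in that leg (or a carefully chosen vertex) has a root-to-leaf path that must include, in the worst case, essentially all edges outside some single leg. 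Combining, $\OPT(T, \c) \geq \max_{i \neq j} \OPT(P_{ij}, \c)$ and $\OPT(T, \c) \geq S - \min_i s_i$, and we lose at most a factor of $2$ because a good strategy for $T$ is: pick the two heaviest legs $L_a, L_b$ (so $s_a, s_b$ are the two largest, hence $s_a + s_b \geq$ a constant fraction of contributions we need to control), run the optimal path strategy from Theorem~\ref{thm:path} on $P_{ab}$, and whenever the answer steers away from both $L_a$ and $L_b$, fall back to querying the remaining legs' edges one by one in a fixed order. The worst-case cost of this combined strategy is at most $\OPT(P_{ab}, \c) + \big(S - s_a - s_b\big) \leq \OPT(T,\c) + \OPT(T,\c) = 2\,\OPT(T,\c)$, where the bound $S - s_a - s_b \leq S - \min_i s_i \cdot(\text{...}) \leq \OPT(T,\c)$ needs the choice of the two largest legs to make $S - s_a - s_b$ small enough — this is exactly the place where some care is required.

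The main obstacle I anticipate is making the lower bound $\OPT(T,\c) \geq S - s_a - s_b$ (or whatever precise quantity is needed) actually correct: it is not literally true that every decision tree pays all edges except those of two legs, because a clever strategy could interleave queries across many legs. The fix is to look at the adversary's freedom: the adversary answers so as to stay consistent with as many vertices as possible, and in particular can force the searcher to keep querying edges in all but (at most) two legs before the worst-case vertex is isolated — or more carefully, one shows that for the specific vertex maximizing depth in the decision tree, its root-to-leaf path, together with a sibling analysis, witnesses the required sum. I would also need to handle the degenerate cases ($k \leq 2$, i.e., $T$ is a path, where Theorem~\ref{thm:path} applies directly and gives an exact answer; and legs of length~$1$, where $T$ is essentially a star and the trivial bound is exact). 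Finally, the running time: computing the $s_i$ is linear, selecting the two heaviest legs is linear, and the one invocation of the path algorithm on $P_{ab}$ costs $O(n^2)$ by Theorem~\ref{thm:path}, so the total is $O(n^2)$ as claimed; assembling the decision tree from the path strategy plus the fixed-order tail adds only linear overhead.
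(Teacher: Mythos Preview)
Your proposal has a genuine gap: the lower bound $\OPT(T,\c) \geq S - s_a - s_b$ that your argument needs is simply false, and no choice of two legs can repair it. Consider a spider with $k \geq 4$ legs, each leg consisting of an inner edge (incident to the head) of cost~$1$ and an outer edge of cost~$M$ with $M$ large. Then every $s_i = 1+M$, so $S - s_a - s_b = (k-2)(1+M)$. But the strategy ``query all $k$ inner edges first, then query the single outer edge of whichever leg the target lies in'' has worst-case cost $k + M$, so $\OPT(T,\c) \leq k + M$. For $k=4$ and $M$ large this gives $\OPT \leq 4+M$ while $S - s_a - s_b = 2 + 2M$, and the inequality you need fails. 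The adversary argument you sketch cannot succeed: a clever searcher can indeed interleave cheap inner queries across all legs and pay only \emph{one} expensive leg in the worst case, not $k-2$ of them.

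The paper's proof avoids this entirely by a simpler decomposition. It does \emph{not} privilege two legs. Instead it first queries the star $S_v$ consisting of the head $v$ and all its neighbours (this identifies which leg contains the target, or that the target is $v$), and then runs the optimal path algorithm of Theorem~\ref{thm:path} on the relevant leg $C_i = L_i - v$. The cost of the resulting decision tree is at most $\OPT(S_v,\c) + \max_i \OPT(C_i,\c)$. Since both $S_v$ and each $C_i$ are subtrees of $T$, Lemma~\ref{lem:subtree} bounds each term by $\OPT(T,\c)$, giving the factor~$2$. The running time is $O(n^2)$ because the path calls are on disjoint vertex sets summing to at most $n$. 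Your ``path-on-two-legs'' idea is not needed and, as the counterexample shows, cannot be made to work without an additional term that pushes the ratio above~$2$.
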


\begin{proof}
If $T$ is a path, then by Theorem~\ref{thm:path} there exists an algorithm computing
the optimal decision tree in $O(n^2)$ time.
Assume $T$ is not a path. Then $T$ contains exactly one vertex $v$ of degree at least three.
Let $S_v$ be the star induced by $v$ and the vertices adjacent to $v$. Let us denote by $w_1,\ldots,w_k$
 the vertices adjacent to $v$, where $k=\deg(v)$.
By Theorem~\ref{thm:path}, for every $i \in  \{1,\dots,k\}$ we construct the optimal decision tree $D_i$
for the path component $C_i$ of $T-v$ containing $w_i$ in time $O(|C_i|^2)$.
Note that the total running time for construction of $D_1,\dots,D_k$ is $O(n^2)$.
Finally, for $S_v$ we compute the optimal decision tree $D_v$ (in $O(n)$ time).
The decision tree $D$ for $T$ is obtained from $D_v$
by replacing the node corresponding to $w_i$ by the root of $D_i$ for every $i \in \{1,\dots,k\}$.
Clearly, the algorithm runs in $O(n^2)$ time
and $cost(D) \leq \OPT(S_v, \c)  + \max_{1\leq i\leq k}\{\OPT(C_i, \c)\}\leq 2\OPT(T, \c)$.
The last inequality follows since  by Lemma~\ref{lem:subtree} both $\OPT(S_v, \c)$ and
$\max_{1\leq i\leq k}\{\OPT(C_i, \c)\}$ are  lower bounds on  $OPT(T, \c)$.
\qed\end{proof}

\section{The Algorithm}

Let $n$ be the size of the input tree  and $t = 2^{\lfloor\log\log n\rfloor+2}$
be a parameter fixed for the whole run of the algorithm.
It holds that  $2 \log n\le t\le 4 \log n$.

The basic idea of our algorithm is to construct a subtree $S$ of the input tree $T$ such that:
(i) we can construct a decision tree for $S$ whose cost is
at most a constant times the cost of an optimal decision tree for $S$;
(ii) each component of $T -S$ has size not larger than $T/t$.

This will allow us to build a decision tree for $T$ by assembling the decision tree for $S$
with the decision trees recursively constructed for the components of $T-S$.
The constant approximation guarantee on $S$ and the fact that, due to the size of the subtrees on which we recurs,
we need at most $O(\frac{\log n}{\log \log n})$ levels of recursion to show
that our algorithm gives an $O(\frac{\log n}{\log \log n})$ approximation.

\smallskip

\noindent
{\bf The subtree $S$.}
We iteratively build subtrees $S_0 \subset S_1 \subset \cdots \subset S_t \subseteq T$.
Starting with the empty tree $S_0$,
in every iteration $i \in \{1,\dots,t\}$ we pick a centroid $x_i$ of the largest connected component of the forest $T-S_{i-1}$.
The subtree $S_i$ is set to be the minimal subtree containing $x_i$ and $S_{i-1}$.
If for some $i$ we have that $S_i=T$, then we set $S = S_i = T$ and we stop the iterations.
If all $t$ iterations are completed then we set $S=S_t$.

By definition, the \emph{centroid} of a tree $T$ is a vertex $v$ such that any maximal component of $T- v$
has size at most $|T|/2$.
Therefore, we have the following lemma---which  establishes (ii) above.

\begin{lemma} \label{lem:H-components}
If $H$ is a maximal connected component of $T - S$, then $|H|\le |T|/\log n$.
\end{lemma}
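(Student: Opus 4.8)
The plan is to control the sizes of the connected components of $T-S_i$ as $i$ grows from $0$ to $t$, charging each iteration against a forced halving of the component it acts on. The key fact is this: if $C$ is the largest component of $T-S_{i-1}$ and $x_i$ is a centroid of $C$, then every component of $T-S_i$ contained in $C$ has size at most $|C|/2$. Passing from $S_{i-1}$ to $S_i$ adds the unique path $P$ of $T$ joining $S_{i-1}$ to $x_i$; since the tree $C$ is attached to $S_{i-1}$ through a single edge, inside $C$ this path runs from one vertex $y$ to $x_i$. A component of $C-P$ hanging from $x_i$ is exactly a component of $C-x_i$, hence of size $\le|C|/2$ because $x_i$ is a centroid; a component of $C-P$ hanging from an interior vertex of $P$ is a proper subset of a component of $C-x_i$, hence even smaller. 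Components of $T-S_i$ disjoint from $C$ are untouched. (The degenerate cases $S_{i-1}=\emptyset$ and $y=x_i$ simply reduce to $C-x_i$.)

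With this, I would run a phase argument. For $j\ge 0$, let $N_j$ be the least index $i$ for which every component of $T-S_i$ has size at most $n/2^j$ (so $N_0=0$). I claim $N_j\le N_{j-1}+(2^j-1)$. Right after iteration $N_{j-1}$ every component has size $\le n/2^{j-1}$; call a component \emph{large} if it has size $>n/2^j$. There are at most $2^j-1$ large components, since each has size $>n/2^j$ while their total size is at most $n$; moreover every large component is strictly larger than every non-large one, so the algorithm always picks a large component next. By the halving fact, processing a large component (which has size $\le n/2^{j-1}$) replaces it by pieces of size $\le n/2^j$ — destroying one large component and creating none. Hence after at most $2^j-1$ further iterations no large component remains, proving the claim; telescoping gives $N_j\le\sum_{i=1}^{j}(2^i-1)=2^{j+1}-j-2$.

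To finish, set $d^\ast=\lceil\log\log n\rceil$, so $2^{d^\ast}\ge\log n$ and hence $n/2^{d^\ast}\le n/\log n$. Since $\lceil\log\log n\rceil\le\lfloor\log\log n\rfloor+1$, we get $2^{d^\ast+1}\le 2^{\lfloor\log\log n\rfloor+2}=t$, so the previous step yields $N_{d^\ast}\le 2^{d^\ast+1}-2<t$. As the sizes of components can only decrease while $S$ grows, once every component has size $\le n/2^{d^\ast}$ this stays true through iteration $t$; therefore every maximal component $H$ of $T-S$ satisfies $|H|\le n/2^{d^\ast}\le|T|/\log n$. (If the algorithm halted early with $S=T$, the statement is vacuous.)

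The crux is the phase argument: each iteration shrinks only a single component (the largest one), yet the number of components that can still be ``large'' after phase $j$ is at most $2^j-1$, and $\sum_{j\le d^\ast}2^j=\Theta(2^{d^\ast})=\Theta(\log n)$ matches the iteration budget $t=\Theta(\log n)$ — so the phases have to be organised so that these geometrically growing counts are summed rather than multiplied, and one must be slightly careful about the floors and ceilings in the definitions of $t$ and $d^\ast$. The remaining point requiring care is the halving fact: cutting $C$ along the whole path $P$, not just at its centroid, must still leave only pieces of size $\le|C|/2$.
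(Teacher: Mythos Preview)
Your proof is correct and follows essentially the same approach as the paper: a phase argument showing that after roughly $2^{j}$ iterations all components have size at most $|T|/2^{j-1}$, by counting the at most $2^{j-1}$ ``large'' components that remain at the start of each phase and noting that each centroid step eliminates one of them. You are somewhat more explicit than the paper about the halving fact along the path $P$ from $S_{i-1}$ to $x_i$ and about the floor/ceiling bookkeeping, and your constants are marginally tighter (you bound the large components by $2^j-1$ rather than $2^{j-1}$ and telescope), but the structure of the argument is the same.
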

\begin{proof}

We prove by induction on $k$ that after $2^k$ iterations all maximal components of $T-S_{2^k}$ have size at most $|T|/2^{k-1}$.
Let $k = 0$. We observe that by the definition of centroid, after $1 = 2^{k}$ 
iterations all components of $T-S_1$ have size at most $|T|/2 \leq |T|/ 2^{k-1} = 2|T|$.
This establishes the basis of our induction. 

Now fix some $k > 0$ and assume (induction hypothesis) that after $2^{k-1}$ iterations all maximal components 
of $T-S_{2^{k-1}}$ have size at most $|T|/2^{k-2}$.
Among these there are at most $2^{k-1}$ components that have size at least $|T|/2^{k-1}$.
In the next $2^{k-1}$ iterations we will choose a centroid in each of these components, one by one.
Choosing a centroid in a component $H$ splits $H$ into parts that have size at most half of $H$,
thus after $2^k=2^{k-1}+2^{k-1}$ steps all components of $T-S_{2^k}$ have size at most $|T|/2^{k-1}$.

Thus, if the process of constructing $S$ is stopped after $t=2^{\lfloor\log\log n\rfloor+2}$ iterations
all components have size at most $|T|/2^{\lfloor\log\log n\rfloor+1}\le |T|/\log n$.
On the other hand, if the process of constructing $S$ is stopped at some iteration $i < t$ then it means that $S =T$ and,
trivially, we have $|H| = 0$.
\qed\end{proof}

\smallskip
\noindent
{\bf The Decision Tree for $S$.}
Let $X$ contain all $x_i$ for $i\in \{1,\dots,t\}$ and vertices of degree at least three in $S$.
Note that $|X|\le 2t+1$. 
Let $P_{u,v}$ be the path of $T$ whose endpoints are vertices $u$ and $v$.

We define an auxiliary tree $Y$ on the vertex set $X$.
Vertices $u,v \in X$ form an edge of $Y$
if $u$ and $v$ are the only vertices of $X$ of the path $P_{u,v}$ in $T$ with endpoints $u$ and $v$.
Let $e_{uv} = \arg \min_{e\in P_{u,v}} \c(e)$ (the edge of $P_{u,v}$ with minimal cost) and
$\c_Y(uv) = c(e_{uv})$.
Let $Z=\bigcup_{uv\in E(Y)} e_{uv}$. By Proposition~\ref{prop:exp},
we can compute an optimal decision tree $D_Y$ for $Y$  in $O(2^{2t}t)$ which is polynomial in $n$.

Let $D_X$ be obtained from $D_Y$ by changing the label of every internal node
from $uv$ to $e_{uv}$, for each $uv \in E(Y)$.
The tree $D_X$ is not a decision tree for $S$, however,
leaves of $D_X$ correspond to connected components of $S-Z$.
Notice that $cost(D_X) = cost(D_Y) = OPT(Y, c_Y)$.

Since every connected component $C$ of $S-Z$ contains at most one vertex of degree at least three,
every such component is a spider. By Lemma~\ref{lem:spider}, a decision tree $D_C$ for each such component $C \in S-Z$
can be computed in $O(n^2)$ time with approximation ratio 2.

We can now obtain the decision tree $D_S$ for $S$ by replacing each leaf in $D_X$ with the decision tree
for the corresponding component in $S-Z$. We have

\begin{eqnarray*} \label{eq:S-approx}
\frac{cost(D_S)}{OPT(S,c)} & \leq &\frac{cost(D_X) + \max_{C \in S-Z} cost(D_C)}{OPT(S, c)} \\
 & \leq &\frac{cost(D_X)}{OPT(Y, c_Y)} + \max_{C \in S-Z} \frac{cost(D_C)}{OPT(C, c)} \leq 3,
\end{eqnarray*}
where the second inequality holds because of  $OPT(Y, c_Y) \leq OPT(S, c)$ (given by Proposition \ref{prop:contract}) and
$OPT(C, c) \leq OPT(S, c)$ (given by Lemma \ref{lem:subtree}).

\smallskip
\noindent
{\bf Assembling the pieces in the Decision Tree for $T$}.
Let  $v$ be a vertex in $S$ with a neighbor not in $S$, let $S_v$ be the star induced by $v$ and its neighbors outside $V(S)$.

Let $D_v$ be a decision tree for $S_v$ (notice that they all have the same cost).
For every neighbor $w \not\in V(S)$ of $v$ we compute recursively the decision tree $D_w$ for
the component $H_w$ of $T-S$ containing $w$ and replace the leaf node of $D_v$
associated to $w$ with the root of $D_w$.
The result is a decision tree  $D'_v$ for the subtree of $T$ including $S_v$
and all the components of $T-S$ including some neighbor $w$ of $v$.

In order to obtain a decision tree $D_T$ for $T$ we now modify $D_S$ as follows:
for each vertex $v$ in $S$ with a neighbor not in $S$, replace the leaf in
$D_S$ associated with $v$ with the decision tree $D'_v$ computed above.

\begin{figure}
    \centering
    \includegraphics[width=\hsize]{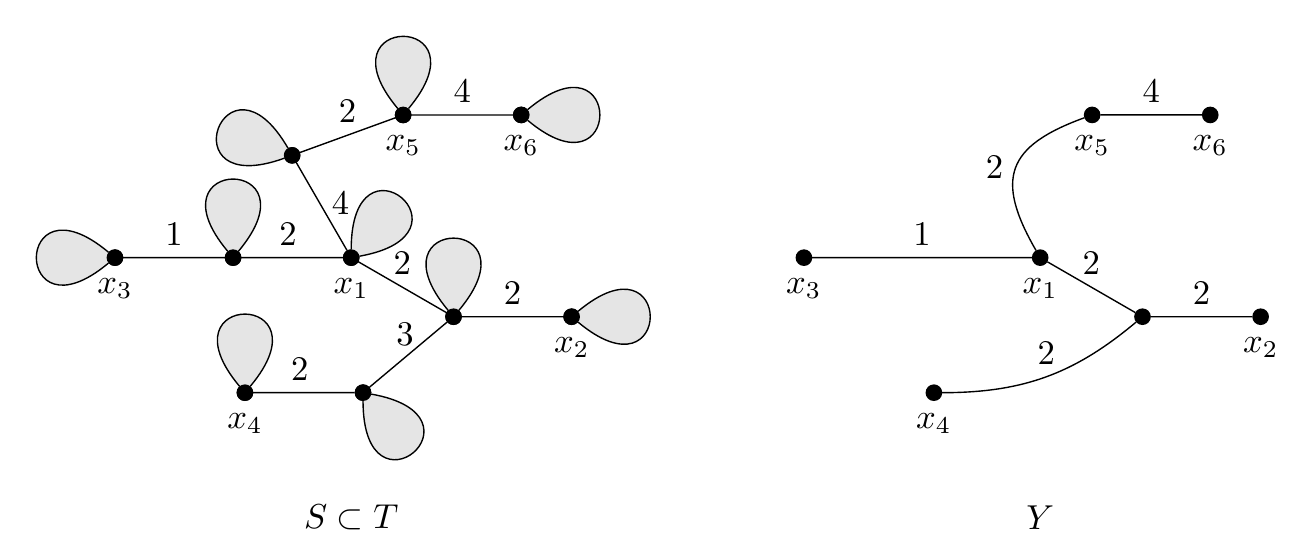}
   \caption{The tree $S$, the important set of vertices $X$ and the auxiliary tree $Y$}
   \label{fig:alg}
\end{figure}

\noindent
{\bf The Approximation guarantee for $D_T$.}
Let $\APP(T)=\frac{cost(D_T)}{OPT(T,c)}$ denote the approximation ratio obtained by Algorithm~\ref{alg} on the instance
$(T, c)$. Let $\APP(k)=\max_{|T|\le k} \APP(T)$.

\begin{lemma}\label{lem:apx}
For any tree $T$ on $n$ vertices and any cost assignment $\c$, we have $\APP(T) \leq 4 \log n/\log\log n$.
\end{lemma}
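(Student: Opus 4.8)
The plan is to set up a recursion for $\APP(k)$ and solve it. Recall that the decision tree $D_T$ is built by first producing $D_S$ (a $3$-approximate decision tree for $S$, as established just above), then at each leaf $v$ of $D_S$ that has neighbors outside $S$, grafting on $D_v'$, which itself consists of a star-decision-tree $D_v$ for $S_v$ with the recursively-built decision trees $D_w$ attached at its leaves. So along any root-to-leaf path of $D_T$, the cost decomposes as a contribution from within $D_S$, a contribution from one star $S_v$, and a contribution from one recursive call on a component $H_w$ of $T-S$. This yields the key inequality
\begin{equation*}
\cost(D_T) \;\le\; \cost(D_S) \;+\; \max_{v}\,\cost(D_v) \;+\; \max_{w}\,\cost(D_w).
\end{equation*}
Now $\cost(D_S) \le 3\,\OPT(S,\c) \le 3\,\OPT(T,\c)$ by Lemma~\ref{lem:subtree}; and $\cost(D_v) = \OPT(S_v,\c) \le \OPT(T,\c)$ since $S_v$ is a subtree of $T$ and a star (again Lemma~\ref{lem:subtree} together with the remark on stars). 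For the recursive term, $\cost(D_w) = \APP(H_w)\cdot\OPT(H_w,\c) \le \APP(|H_w|)\cdot\OPT(T,\c)$, and by Lemma~\ref{lem:H-components} each such component satisfies $|H_w| \le |T|/\log n = n/\log n$. Dividing through by $\OPT(T,\c)$ gives
\begin{equation*}
\APP(T) \;\le\; 4 \;+\; \APP\!\left(\frac{n}{\log n}\right),
\end{equation*}
hence $\APP(k) \le 4 + \APP(k/\log k)$ (noting $\log|H_w| \le \log n$ so the bound on the component size is, if anything, better than $k/\log k$; one must be slightly careful here and I would phrase it as: each component has size at most $k/\log k$, which is what drives the recursion).

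Next I would unroll this recursion. Each level of recursion costs an additive $4$ and shrinks the instance size from $m$ to at most $m/\log m$. The number of levels needed to bring $n$ down to a constant (at which point $\APP$ is trivially bounded) is $O(\log n/\log\log n)$: indeed, after one step the size is $n/\log n$, and since $\log(n/\log n) \ge \tfrac12\log n$ for $n$ large, each subsequent step divides by at least $\sqrt{\log n}$ roughly, so in fact after $j$ steps the size is at most $n/(\log n)^{j/2}$ or so, and this reaches $O(1)$ once $j \approx 2\log n/\log\log n$. Multiplying the number of levels by the per-level cost of $4$ gives $\APP(n) = O(\log n/\log\log n)$, and I would then pin down the constant to get exactly $4\log n/\log\log n$ (possibly absorbing small-$n$ base cases and lower-order terms, or arguing the clean bound holds for all $n$ by a direct induction on $n$ using the recursion $\APP(n)\le 4 + \APP(n/\log n)$ with the claimed bound as the inductive hypothesis).

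The main obstacle is the bookkeeping in the recursion analysis: one must verify that the claimed closed form $4\log n/\log\log n$ actually satisfies $f(n) \ge 4 + f(n/\log n)$, i.e., that $\frac{\log n}{\log\log n} - \frac{\log(n/\log n)}{\log\log(n/\log n)} \ge 1$. This requires estimating $\log\log(n/\log n) = \log(\log n - \log\log n)$ and showing the decrement in the ratio is at least $1$; it is true but needs a careful (if elementary) calculation, and handling the regime of small $n$ where $\log\log n$ is tiny or the recursion bottoms out needs separate attention. I would do this as a clean induction on $n$: assume $\APP(m) \le 4\log m/\log\log m$ for all $m < n$, apply the recursion, and reduce to the inequality above.
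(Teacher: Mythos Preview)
Your decomposition is exactly the paper's, and the additive bound
\[
\APP(T) \;\le\; 4 \;+\; \max_{H}\,\APP(H)
\]
is correct. The gap is in the recursion analysis.

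You set up the recursion as $\APP(k) \le 4 + \APP(k/\log k)$ and then propose to close the induction with the hypothesis $\APP(m) \le 4\log m/\log\log m$, asserting that the needed inequality
\[
\frac{\log n}{\log\log n} \;-\; \frac{\log(n/\log n)}{\log\log(n/\log n)} \;\ge\; 1
\]
``is true but needs a careful calculation.'' In fact it is \emph{false}. Writing $L=\log n$ and $\ell=\log L$, one has $\log(L-\ell)<\ell$, hence $(L-\ell)/\log(L-\ell) > (L-\ell)/\ell = L/\ell - 1$, so the left-hand side is strictly less than $1$. Concretely, for $n=2^{16}$ the left-hand side is $16/4 - 12/\log 12 \approx 4 - 3.35 = 0.65 < 1$. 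So the induction with $4\log m/\log\log m$ as hypothesis does not close, and your unrolling argument, while giving $O(\log n/\log\log n)$, cannot deliver the constant $4$.

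What you are missing is that the parameter $t$ is computed once in \textsc{Main} from the \emph{original} input size $n$ and then passed unchanged through every recursive call. Consequently, at a recursive call on a subtree of size $k$, Lemma~\ref{lem:H-components} still gives components of size at most $k/\log n$, not $k/\log k$. The paper exploits this by taking the inductive hypothesis to be
\[
f(k) \;=\; \max\Bigl\{1,\; \frac{4\log k}{\log\log n}\Bigr\},
\]
with the denominator $\log\log n$ \emph{fixed} throughout. Then
\[
4 + f\!\left(\frac{k}{\log n}\right) \;=\; 4 + \frac{4(\log k - \log\log n)}{\log\log n} \;=\; \frac{4\log k}{\log\log n} \;=\; f(k),
\]
so the induction step is an identity, and the base case $\APP(k)=1$ for $k\le t$ is immediate since the algorithm solves small instances optimally. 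Evaluating at $k=n$ gives exactly $4\log n/\log\log n$.
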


\begin{proof}
For every $1 \leq k \leq n$ let $f(k)= \max\{1, 4\log k/\log\log n\}$.
We shall prove by induction on $k$ that $APP(k)\le f(k)$, which implies the statement of the lemma.

If $|T|\le t$ then our algorithm builds an optimal decision tree,
thus $APP(k)=1\le f(k)$ for $k\le t$.
This establishes the induction base.

Choose a tree $T$ as in the statement of the lemma such that $APP(T) = APP(n)$.
Let $S$ and $Y$ be the substructures of $T$ built by the algorithm as described above.
Let $\tilde{V}$ be the set of vertices of $S$ with some neighbor not in $S$.
For each $w \not \in V(S)$ let $H_w$ be the maximal component of $T-S$ containing $w$.
Let ${\cal H}$ be the set of maximal components of $T-S$.
Then, by construction, we have
\begin{eqnarray}
\APP(T)&=&\frac{\ALG(T)}{\OPT(T)} \leq \frac{cost(D_S) + \max_{v \in \tilde{V}} cost(D_v) + \max_{w \not \in V(S)} cost(D_w)}{OPT(T,c)}  \label{eq:a}\\
&\leq& \frac{cost(D_S)}{OPT(S,c)} + \max_{v \in \tilde{V}} \frac{cost(D_v)}{OPT(S_v, c)} + \max_{w \not \in V(S)} \frac{cost(D_w)}{OPT(H_w, c)} \label{eq:b}\\
&\leq& 4+ \max_{H \in {\cal H}} \frac{ALG(H)}{OPT(H, c)} =  4+ \max_{H\in {\cal H}}\{\APP(H)\} \label{eq:c}\\
&\leq& 4 + \max_{H \in {\cal H}} f(|H|) \leq 4 +  f(|T|/\log n)  \label{eq:d} \\
&=&  4 + f(n/\log n) =  4 + \frac{4 \log \frac{n}{\log n}}{\log \log n} =  \frac{4 \log n}{\log \log n},  \label{eq:e}
\end{eqnarray}
where

\begin{itemize}
\item  (\ref{eq:b}) follows from (\ref{eq:a}) because of $OPT(S,c), OPT(S_v), OPT(D_w) \leq OPT(T,c)$ (Lemma \ref{lem:subtree})
\item  (\ref{eq:c}) follows from (\ref{eq:b}) because of (\ref{eq:S-approx}) and
the fact that any decision tree for a star $S_v$ has the same cost, hence also equal to $OPT(S_v, c)$
\item  in (\ref{eq:d}) the first inequality  follows by induction and the second inequality by Lemma~\ref{lem:H-components}
\item  (\ref{eq:e}) follows from (\ref{eq:d}) because of $|T| = n$ and the definition of $f(\cdot)$.
\end{itemize}
\qed\end{proof}

\begin{lemma}\label{lem:time}
For a tree $T$ on $n$ vertices, the Algorithm~\ref{alg} builds the decision tree $D_T$ in time polynomial in $n$.
\end{lemma}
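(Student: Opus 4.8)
The plan is to bound the running time by charging work to the levels of recursion and to the size of the tree processed at each level. First I would observe that the recursion has depth $O(\log n / \log\log n)$: by Lemma~\ref{lem:H-components} every component $H$ of $T-S$ on which we recurse satisfies $|H| \le |T|/\log n$, so after $r$ levels the surviving subtrees have size at most $n/(\log n)^r$, and this drops below the threshold $t \le 4\log n$ (at which point the algorithm invokes the exact exponential algorithm on a tree of size $O(\log n)$) once $r = O(\log n/\log\log n)$. Moreover, at each fixed level of the recursion the subtrees on which we recurse are vertex-disjoint components of (subforests of) $T$, so their sizes sum to at most $n$.

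Next I would bound the cost incurred in processing a single subtree $T'$ of size $m$ at one level, \emph{excluding} the recursive calls. The construction of $S_0 \subset \cdots \subset S_t$ requires $t$ centroid computations, each $O(m)$, hence $O(tm) = O(m\log n)$. Building the auxiliary tree $Y$ on $|X| \le 2t+1$ vertices and the costs $\c_Y$ takes time polynomial in $m$. Computing the optimal decision tree $D_Y$ via Proposition~\ref{prop:exp} costs $O(2^{2t} t)$; since $t \le 4\log n$ we have $2^{2t} \le 2^{8\log n} = n^{8}$, so this is polynomial in $n$. Constructing $D_X$ from $D_Y$ is a relabeling, and the decision trees $D_C$ for the spider components $C$ of $S-Z$ are built by Lemma~\ref{lem:spider} in $O(m^2)$ total (the components are disjoint, so their individual $O(|C|^2)$ bounds sum to $O(m^2)$). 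Assembling $D_S$, the stars $S_v$, and the pieces $D'_v$ and $D_T$ is linear in $m$. Thus the non-recursive work on a subtree of size $m$ is bounded by some fixed polynomial $p(n)$, say $p(n) = O(n^{9})$, uniformly over all subtrees at all levels.

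Finally I would combine these. At each of the $O(\log n/\log\log n)$ levels, the subtrees are disjoint and there are at most $n$ of them, and each contributes at most $p(n)$ non-recursive work, so each level costs at most $n \cdot p(n)$; summing over the levels gives total running time $O\bigl(n \cdot p(n) \cdot \log n/\log\log n\bigr)$, which is polynomial in $n$. The main obstacle — and the step that makes the argument work at all — is ensuring that the exponential subroutine of Proposition~\ref{prop:exp} is only ever applied to instances of size $O(\log n)$, namely $Y$ with $|V(Y)| = |X| \le 2t+1$; this is exactly why $t$ is chosen to be $\Theta(\log n)$ rather than larger, and it is what converts the $O(2^{2t}t)$ bound into a polynomial. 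I would state this connection explicitly and verify the arithmetic $2^{2t} \le n^{8}$ carefully, since everything hinges on it. \qed
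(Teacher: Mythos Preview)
Your proposal is correct and follows essentially the same approach as the paper's proof: bound the non-recursive work per call (dominated by the $O(2^{2t}t)$ cost of Proposition~\ref{prop:exp} on $Y$, which is polynomial since $t=\Theta(\log n)$), observe that the recursive subproblems are vertex-disjoint, and use the shrinkage of subproblem sizes to bound the total work. The only minor difference is that the paper invokes the weaker fact that each recursive subtree has size at most half of its parent (giving depth $O(\log n)$), whereas you use Lemma~\ref{lem:H-components} to get the sharper depth $O(\log n/\log\log n)$; either suffices for the polynomial bound, and your accounting of the number and cost of calls per level is more explicit than the paper's.
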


The proof of Lemma~\ref{lem:time} is deferred to the appendix.
Lemma~\ref{lem:time} and Lemma~\ref{lem:apx} now imply Theorem~\ref{thm:main}.

\algnewcommand{\LineComment}[1]{\State \(\triangleright\) #1}
\begin{varalgorithm}{TS}
\caption{Tree Search Algorithm}\label{alg}
\begin{algorithmic}[1]
\small
\Function{Main}{tree $T$, cost $\c$}
\State $t \gets 2^{\lfloor \log \log |T| \rfloor +2}$
\State {\bf Output} $D \gets TreeSearch(T, \c, t)$
\EndFunction
\Function{TreeSearch}{tree $T$, costs $\c$, t}
\State {\bf	if} $|T|\le t$ {\bf then}  {\bf return} optimal decision tree $D_X$ for $T$ computed by Proposition~\ref{prop:exp}
	\State $S_0 \gets \emptyset$   \label{alg:sbuild1}
   \ForAll{$i=1,\dots,t$}
      \State $x_i \gets$ centroid of a maximum size connected component of $T-S_{i-1}$ 
      \State $S_i \gets$ smallest subtree containing $x_i$ and $S_{i-1}$
   \EndFor   \label{alg:sbuild2}
   \State $X \gets \{x_i|\; i=1,\dots,t\} \cup \{v\in V(S)|\; \deg_S(v)\ge 3\}$
   \State $Y \gets$ tree on vertex set $X$, $uv\in E(Y)$ iff $X\cap P_{u,v}=\{u,v\}$
   \ForAll {$uv\in E(Y)$}
      \State $\c_Y(uv) \gets \min_{e\in P_{u,v}} \c(e)$
      \State $e_{uv} \gets$ edge of $P_{u,v}$ with minimum cost
   \EndFor
   \State $Z \gets \bigcup_{uv\in E(Y)} e_{uv}$
   \State Compute optimal decision tree $D_Y$ for $(Y, \c_Y)$ by Proposition~\ref{prop:exp}
   \ForAll {$uv\in E(Y)$}
      \State Replace label of $uv$ in $D_Y$ by $e_{uv}$
   \EndFor
   \ForAll {$H$ connected component of $Y-Z$}
      \LineComment{$H$ contains at most one vertex of degree 3 or more, i.e., $H$ is a spider}
      \State Compute 2-approximate decision tree $D_H$ for $H$ by Lemma~\ref{lem:spider}   \label{alg:spider}
      \State replace the leaf $k\in D_Y$ corresponding to $H$ by the root of $D_H$
   \EndFor
      \ForAll {$v\in V(S)$ with a neighbor not in $S$}
      \State $S_v \gets$ star induced by $v$ and its neighbors outside of $V(S)$ \label{alg:sv}
			\State Construct decision tree $D_v$ for $(S_v,\c)$
      \ForAll {$w\in S_v\setminus \{v\}$}
         \State $U \gets $ connected component of $T-S$ containing $w$
         \State $D_w \gets TreeSearch(U, \c, t)$   \label{alg:recurse}
	 \State leaf of $D_v$ corresponding to $w$ $\gets$ root of $D_w$
      \EndFor
       \State replace the leaf of $D_Y$ associated to $v$ by the root of $D_v$
  \EndFor


   \Return $D_Y$
\EndFunction

\end{algorithmic}
\end{varalgorithm}
\normalsize

\section{Tree search with non-uniform costs is NP-hard on spider graphs}
In this section we provide a new hardness result which contributes to refining
the separation between hard and polynomial instances of the tree search problem
with non-uniform costs.
We show that the problem of finding a minimum cost decision tree is  hard even
for instances where the input graph is a spider and the length of every leg is three.

\smallskip

Our reduction is from the Knapsack Problem.
%
%
%
The input of the Knapsack Problem is given by:
a knapsack size $W$, a desired value $V$, and  a set of items, $(v_i,w_i)_{i \in [m]}$,
where $v_i$ is the value and $w_i$ is the weight of the $i$th item.
The goal is to decide whether there exists a subset of items of total value at least $V$ 
and whose weight can be contained in the knapsack,
i.e., whether there is a $J \subseteq [m]$ such that $\sum_{j\in J} w_j \le W$ and $\sum_{j\in J} v_j \ge V$.

From a knapsack instance we construct an instance $(S, \c)$ for the tree search problem with non-uniform costs,
where $S$ is a spider.
Each leg will correspond to an item.
Therefore, we will speak of the $i$th leg as the leg corresponding to the $i$th item.
For each $i \in [m]$, the $i$th leg will consist of three edges:
the one closest to the head will be called {\em femur} (and referred to as $f_i$),
the middle edge will be called {\em tibia} (and referred to as $t_i$),
the end will be called the {\em tarsus} (and referred to as $s_i$). 
The cost function is defined as follows:
For each $i \in [m]$, we set $c(f_i) = v_i + w_i$; $c(t_i) = v_i$ and $c(s_i) = N$, with
$N$ a large number to be determined later.

It is easy to see that in an optimal strategy,
for each $i \in [m]$ the edge $s_i$ is always queried last among the edges on the $i$th leg.
Given a decision tree $D$,
we denote by $I^D$ the set of indices of the legs for which,
in $D$, the node associated with the query to the tibia is an ancestor of the node associated with the
query to the femur.
Then,
we have the following proposition,
whose proof is deferred to the appendix.
%
\begin{proposition} \label{prop:knap-dectreestructure}
There is an optimal decision tree $D$ with $I^D \neq \emptyset$ and such that:

(i) for any $i \in I^D$ and $j \in [m] \setminus I^D$ the node of D associated
with the $j$th femur is an ancestor of the node associated with the $i$th tibia.

(ii) for any $i, j \in I^D$ the node of D associated with the $i$th tibia is an
ancestor of the node associated with the $j$th femur.

\end{proposition}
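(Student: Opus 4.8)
The plan is to start from an arbitrary optimal decision tree, extract its global shape, deduce $I^{D}\neq\emptyset$ from a cost comparison, and then reorder a single path inside it by cost--non--increasing exchanges until (i) and (ii) hold. Throughout, write the $i$th leg as $h-a_i-b_i-c_i$, so $f_i=\{h,a_i\}$, $t_i=\{a_i,b_i\}$, $s_i=\{b_i,c_i\}$ with $c(f_i)=v_i+w_i$, $c(t_i)=v_i$, $c(s_i)=N$. Since every leg of $S$ has length $3$, deleting any edge separates off a ``tip'' component of size $1,2$ or $3$ from the rest; hence in an optimal $D$ each query has one child handling a tip and the other handling the rest, and since repeating a query, or asking one whose answer is already forced, never helps, each edge is queried at most once (so ``the node of $t_i$'' and ``the node of $f_i$'' are well defined) and every query is informative. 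Thus $D$ is a caterpillar: following the ``rest'' children gives a spine, and only the small tip--subtrees hang off it. Using that $s_i$ is asked last on leg $i$ (so $s_i$ never lies on the spine above $t_i$ or $f_i$), one checks that, with $I:=I^{D}$, the spine nodes are exactly: the $f_j$ for $j\notin I$, off each of which hangs the $3$--vertex interior $\{a_j,b_j,c_j\}$ of leg $j$; the $t_i$ for $i\in I$, off each of which hangs the $2$--vertex set $\{b_i,c_i\}$ (resolved by $s_i$); and the $f_i$ for $i\in I$, each lying below $t_i$ with the single leaf $a_i$ hanging off it. The spine ends at the leaf $h$.

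Next I would show $I^{D}\neq\emptyset$ for every optimal $D$. If $I^{D}=\emptyset$, the shape above forces the spine to be $f_{\sigma(1)}\to\dots\to f_{\sigma(m)}\to h$, and a deepest leaf of the interior of leg $\sigma(m)$ costs at least $\sum_j(v_j+w_j)+v_{\sigma(m)}+N\ge \sum_j(v_j+w_j)+\min_j v_j+N$, so $\cost(D)\ge \sum_j(v_j+w_j)+\min_j v_j+N$. On the other hand there is a decision tree $D^{*}$ with spine $t_{\pi(1)}\to\dots\to t_{\pi(m)}\to f_{\rho(1)}\to\dots\to f_{\rho(m)}\to h$ that asks $s_i$ on the deep branch of each $t_i$ and treats the remaining vertices $\{h\}\cup\{a_i:i\in[m]\}$ as a star on the femurs; its cost is $\max\{\sum_j v_j+N,\ \sum_j v_j+\sum_j(v_j+w_j)\}$, which is strictly smaller once $N$ is large enough. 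Hence no optimal tree can have $I^{D}=\emptyset$.

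Finally, take an optimal $D$; by the above $I:=I^{D}\neq\emptyset$, and its spine is some arrangement of $\{f_j:j\notin I\}\cup\{t_i:i\in I\}\cup\{f_i:i\in I\}$ in which each $t_i$ precedes $f_i$. I would bubble this arrangement into the block order $[\,f_j:j\notin I\,]\,[\,t_i:i\in I\,]\,[\,f_i:i\in I\,]$ by adjacent transpositions on the spine. Each transposition changes only the subtree rooted at the higher of the two swapped nodes, so it is enough that it does not raise the maximum leaf cost there; this is a short case analysis. For example, lifting a femur $f_j$ ($j\notin I$) above a tibia $t_i$ ($i\in I$) lowers the cost of every leaf of leg $j$ by $v_i$ while raising only the two leaves $b_i,c_i$, by $v_j+w_j$; since the leg--$j$ leaf costs already carry the summand $N$ contributed by $s_j$, the subtree maximum cannot increase. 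The remaining swaps — a non--$I$ femur past an $I$ femur, an $I$ tibia past an $I$ femur (whose tip is then the lone leaf $a_j$, as $t_j$ already sits above), and ordering the non--$I$ femurs by decreasing $v$ among themselves — are all handled the same way, each using that $N$ dominates the $v$'s and $w$'s, and none of them ever moves $t_i$ below $f_i$, so $I$ is unchanged. After the sort $D$ is still optimal, still has $I^{D}=I\neq\emptyset$, and all non--$I$ femurs lie above all $I$ tibias above all $I$ femurs on one path, which is precisely (i) and (ii).

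The step I expect to be the real obstacle is the first one: making rigorous that optimality forces the caterpillar/spine structure — so that the ancestor statements in (i) and (ii) even make sense, and so that the queries to be reordered all lie on a single path where adjacent transpositions are available — rather than the arithmetic in the later steps.
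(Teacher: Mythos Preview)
Your proposal is correct and follows essentially the same route as the paper: both arguments reduce to adjacent exchanges along the root--to--$h$ path of the decision tree, showing that each such swap does not increase the cost, until the block order $[\text{non--}I\ \text{femurs}][I\ \text{tibias}][I\ \text{femurs}]$ is reached. The paper phrases this as eliminating three ``bad'' adjacent patterns (a non--$I$ femur just below an $I$ tibia; an $I$ tibia just below an $I$ femur; a non--$I$ femur just below an $I$ femur), which is exactly your bubble--sort step.

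Two minor differences are worth noting. First, for $I^{D}\neq\emptyset$ the paper uses a purely local move: in any tree with $I^{D}=\emptyset$ it rewires the bottom of the spine so that the last femur $f_i$ is replaced by $t_i$ with $f_i$ and $s_i$ as its children, and checks directly that this cannot raise the cost. Your argument instead compares a global lower bound for the $I=\emptyset$ case to the cost of the ``all tibias first'' tree; this is also valid but needs $N$ large and the values positive, whereas the paper's local swap needs no such hypothesis. Second, your worry about the caterpillar structure is well placed but easily resolved: since every edge of the spider separates the head $h$ from a tip, following the $h$--side answers from the root gives a single path (the spine) ending at the leaf $h$; removing $a_i$ from the head component forces $f_i$ onto the spine for every $i$, and then $t_i$ is on the spine (necessarily above $f_i$) exactly when $i\in I$, with the tips hanging off as you describe. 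The paper leaves this implicit, so your explicitness here is an improvement rather than an obstacle. The extra remark about ordering the non--$I$ femurs among themselves is unnecessary for (i)--(ii) and can be dropped.
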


By this proposition, we can assume that in the optimal decision tree $D$
for at least one leg of the spider the first edge queried is a tibia. In addition, in $D$,
there is a root to leaf path where first all femurs not in $I^D$ are queried,
then all tibias in $I^D$ and finally all femurs in $I^D$
(see Fig.~\ref{fig:knap-spider} in Appendix for a pictorial example).
Then, the cost of such a decision tree is given by the maximum between the cost of the leaf on the legs with index in $I^D$
and whose tibia is queried as last,
and the cost of the central vertex of the spider.
It follows that the cost of the optimal solution is given by the following expression
$$
OPT(S, \c) = \min_{\emptyset \subset I \subseteq [m]} \max \left\{N + \sum_{i \not \in I} (v_i+w_i) + \sum_{i \in I} v_i ;
\sum_{i \in I} v_i + \sum_{i \in [m]} (v_i + w_i) \right\}
$$

If we set $N = \sum_{i \in [m]} (v_i + w_i) - W - V$, then we can rewrite the above expression as follows:
$$
OPT(S, \c) = \min_{\emptyset \subset I \subseteq [m]} \max \left\{N + \sum_{i \not \in I} w_i + \sum_{i \in [m]} v_i ;
N + W + V  + \sum_{i \in I} v_i  \right\}
$$

Now, it is easy to see that $OPT(S, \c)$  is at most $\sum_{i \in [m]} v_i+N+W$ if an only if
$\sum_{i \not \in I} w_i \leq W$ and $\sum_{i \not \in I} v_i \geq V$, that is, if and only if
the set $[m] \setminus I$ is a solution for the knapsack problem.
Note that as the values and weights are unrelated, we can indeed choose $N$ as big as necessary for the above reduction,
which is clearly polynomial in the size of the input to the knapsack problem.

%
%

\section*{Acknowledgment}
We are very grateful to Bal\'azs Patk\'os for organizing $5^{\mathrm{th}}$  Eml\'ekt\'abla Workshop 
where we collaborated on this paper.

%

\newpage
\appendix

\section*{Appendix}

\subsection*{The Proof of Proposition \ref{prop:contract}}

\begin{proof}
Let $D$ be an optimal decision tree for the instance $(T, \c)$.
Let us assume without loss of generality that in $D$ the node $\nu_1$
associated with $e_1=\{u_1, v\}$ is an ancestor of the node $\nu_2$ associated
with $e_2 = \{u_2, v\}$.
Notice that one of the children of $\nu_2$ is a leaf associated with the vertex $v$.
Let $\tilde{D}$ be the subtree of
$D$ rooted at the non-leaf child of $\nu_2$.

Let $D'$ be the decision tree obtained from $D$ by associating the node $\nu_1$ to the edge $e = \{u_1, u_2\}$ and replacing the
subtree rooted at $\nu_2$ with the subtree $\tilde{D}$.

It is not hard to see that $D'$ is a proper decision tree for $T'$.
In addition we also have that for any vertex $z$ of $T'$ which is associated to
a leaf in $\tilde{D}$ it holds that $cost^{D'}(z) = cost^{D}(z) - \c(e_1) - \c(e_2) +  \c'(u_1u_2)$,
and for any other vertex $z$ of $T'$ we have
$cost^{D'}(z) = cost^{D}(z) - \c(e_1) + \c'(u_1u_2)$ or $cost^{D'}(z) = cost^{D}(z)$.
It follows that $OPT(T', \c') \leq cost(D') \leq  cost(D) = OPT(T, \c)$.
\qed\end{proof}

\subsection*{The Proof of Lemma \ref{lem:time}}

\begin{proof}
If $|T|\le t$ then the algorithm builds an optimal  decision tree for $T$ 
in time $O(2^{t}\cdot t) = O(n^4)$ using the construction from Proposition~\ref{prop:exp}.
Otherwise, every iteration needed to build the subtree $S$ (lines \ref{alg:sbuild1}--\ref{alg:sbuild2} of the algorithm)
introduces one new vertex $x_i$ and at most one other vertex of degree at least three, thus $|X| \leq 2t+1$.
Proposition~\ref{prop:exp} then implies that an optimal decision tree $D_Y$ for $Y$
can be computed in time $O(2^{2t}\cdot 2t)$ which is polynomial in $n$.
By Lemma~\ref{lem:spider}, the $2$-approximation decision tree $D_H$ for $H$ can be computed in $O(n^2)$ time.
Building the decision tree $D_v$ for the stars $S_v$  takes $O(|S_v|)$ time (line~\ref{alg:sv}).
The rest of the algorithm, not counting the recursion on line~\ref{alg:recurse},
needs time $O(n^2)$.
As the recursion is for a graph whose size is at most half of the original,
the overall algorithm running time is polynomial in $n$.
\qed\end{proof}

\subsection*{The Proof of Proposition \ref{prop:knap-dectreestructure}}

\begin{figure}
    \centering
    \includegraphics{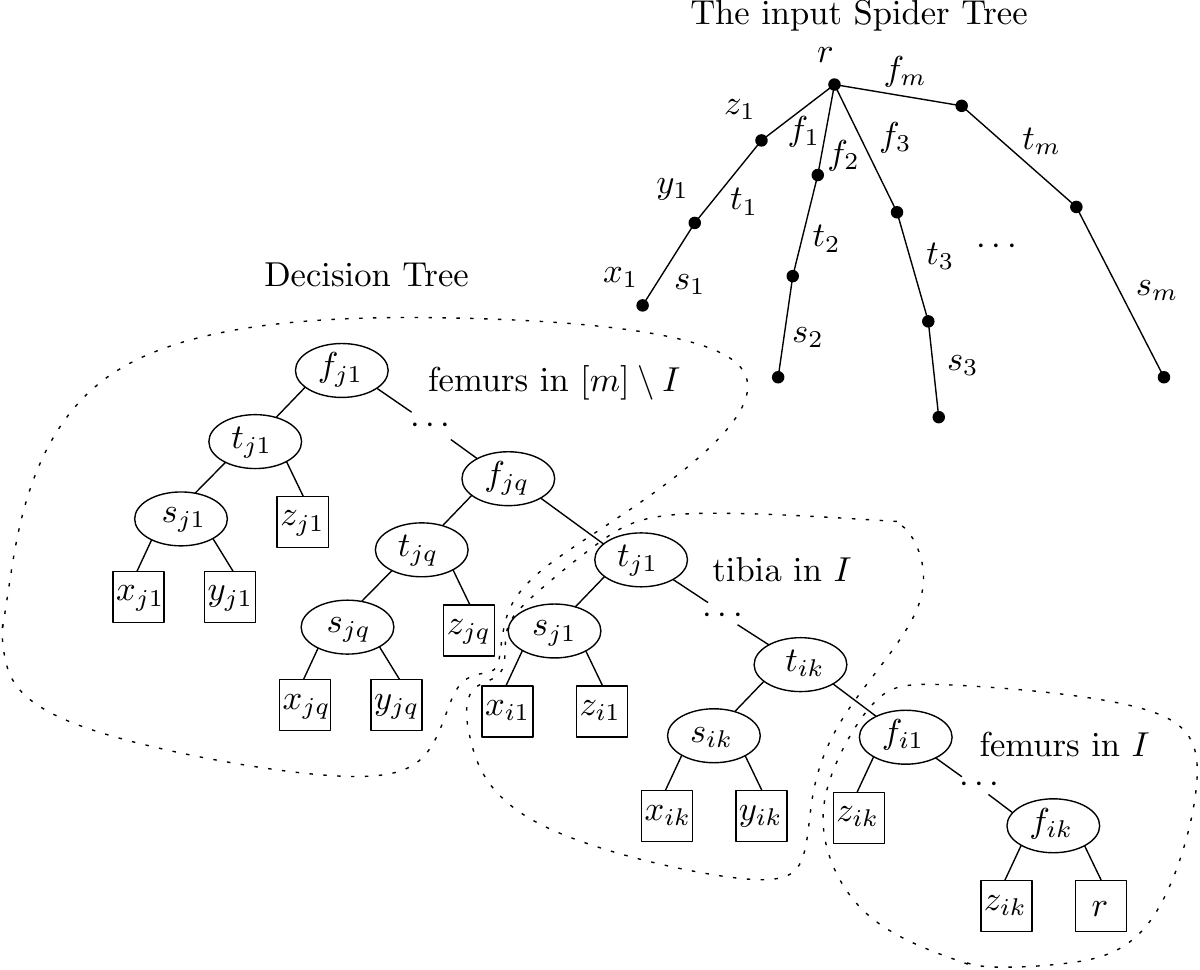}
   \caption{The structure of the optimal decision tree in Proposition \ref{prop:knap-dectreestructure}.
   For the ease of notation, we use $I$ for $I^D$.
   The cost of this decision tree can be obtained as the max of the costs provided by the leaf associated to $x_{i_k}$ and the
   leaf associated with $r$.}
   \label{fig:knap-spider}
\end{figure}




\begin{proof}
We first show that there is an optimal decision tree with $I^D \ne \emptyset$. 
Let $D^*$ be a decision tree where each femur is queried before the corresponding tibia, i.e., $I^{D^*} = \emptyset$.
Let $i$ be the index of the last femur queried. Therefore one of the two children of the node querying $f_i$ is a
leaf associated to $r$, while in the subtree rooted at the other child the leaves are associated to the vertices in the $i$th leg.
Let $z_i, y_i, x_i$, denote the vertices on the $i$th leg in order of increasing distance from $r$.
It is not hard to see that
$$\max_{v \in \{z_i, y_i, x_i, r\}} cost^{D^*}(v)=  K+c(f_i)+c(t_i)+c(s_i),$$ where $K$ is the
cost of the queries on the path from the root of $D^*$ to the parent of the node associated with the query to $f_i$.

Now consider the decision tree obtained from $D^*$ by replacing the query to $f_i$ with a query to $t_i$,
then one child of this node queries
$f_i$ and the other child queries $s_i$. Let $D'$ be the resulting decision tree. It is not difficult to see that we now have
\begin{eqnarray*}
\max_{v \in \{z_i, y_i, x_i, r\}} cost^{D'}(v) &=& \max\{K+ c(t_i)+c(s_i), K+ c(t_i)+c(f_i)\} \\
&\leq& \max_{v \in \{z_i, y_i, x_i, r\}} cost^{D^*}(v)
\end{eqnarray*}
and $cost^{D'}(v) = cost^{D^*}(v)$ for any $v \not \in \{z_i, y_i, x_i, r\}$. Hence $cost(D') \leq cost(D^*)$ with $I^{D'} \neq \emptyset$ for $D'$.

Now, assuming that $I = I^D \neq \emptyset$, we can show (i) and (ii). First we observe that if at least one of (i) and (ii) does not hold then at least one of the
following conditions holds:
\begin{itemize} 
\item[(i')] there exists $i \in I$ and $j \in [m]\setminus I$ such that the node $\nu_j$ associated with $f_j$ is a child of the node $\nu_i$ associated with $t_i$;
\item[(ii')] there exists $i,j \in I$ such that the node $\nu_i$ associated with $t_i$ is a child of the node $\nu_j$ associated with $f_j$;
\item[(iii')] there exists $i \in I$ and $j \in [m]\setminus I$ such that the node $\nu_j$ associated with $f_j$ is a child of the node $\nu_i$ associated with $f_i$.
\end{itemize}
Indeed, if none of these three conditions holds then  (i) and (ii) follow.

Therefore, it is enough to show that if we have an optimal tree where one of the three conditions holds, by swapping the nodes $\nu_i$ and $\nu_j$
involved, we can obtain a new decision tree whose total cost is not larger than the cost of the original decision tree.
This implies that by repeated use of this
swapping procedure, we have an optimal  decision tree where both (i) and (ii) hold.

We shall limit to explicitly show this argument for the case where in the optimal decision tree $D^*$ condition (i') holds.
Therefore, we have
$$
\max_{v \in \{z_j, y_j, x_j\}} cost^{D^*}(v) = K+c(t_i) + c(f_j) + c(t_j) + c(s_j)
$$
$$
cost^{D^*}(x_i) =  cost^{D^*}(y_i) = K + c(t_i)+c(s_i)
$$

Let $D'$ be the decision tree obtained after swapping the queries to $f_j$ and
the query to $s_i$ so that now the latter is the parent of the former.
Therefore, we have
$$
\max_{v \in \{z_j, y_j, x_j\}} cost^{D'}(v) = K+ c(f_j) + c(t_j) + c(s_j)
$$
$$
cost^{D'}(x_i) =  cost^{D'}(y_i) = K + c(f_j)+ c(t_i)+c(s_i)
$$

and for each $v \not \in \{z_j, y_j, x_j, y_i, x_i\}$ it holds that $cost^{D^*}(v) = cost^{D'}(v)$.

Since $c(s_i) = c(s_j)$ we have that
$$\max_{v \in \{z_j, y_j, x_j, y_i, x_i\}} cost^{D'}(v) \leq \max_{v \in \{z_j, y_j, x_j, y_i, x_i\}} cost^{D^*}(v),$$ hence
$cost(D') \leq cost(D^*)$.

We can use an analogous argument to show that we can swap queries in order to have an optimal decision tree where neither (ii') nor (iii') holds.
The resulting tree satisfies (i) and (ii) as desired.
\qed\end{proof}

\end{document}